\def\BibTeX{{\rm B\kern-.05em{\sc i\kern-.025em b}\kern-.08em
    T\kern-.1667em\lower.7ex\hbox{E}\kern-.125emX}}
\newtheorem{lemma}{Lemma}
\newtheorem{proposition}{Proposition}
\newtheorem{theorem}{Theorem}
\begin{document}
\title{Modeling and Designing Non-Pharmaceutical Interventions in Epidemics:\\ A Submodular Approach}
\author{Shiyu Cheng, \IEEEmembership{Graduate Student Member, IEEE}, Luyao Niu, \IEEEmembership{Member, IEEE}, Bhaskar Ramasubramanian, \IEEEmembership{Member, IEEE}, Andrew Clark, \IEEEmembership{Senior Member, IEEE}, and Radha Poovendran, \IEEEmembership{Fellow, IEEE}
\thanks{S. Cheng and A. Clark are with the Department of Electrical and Systems Engineering, Washington University in St. Louis, St. Louis, MO, USA. Email: \{cheng.shiyu, andrewclark\}@wustl.edu}
\thanks{L. Niu is with the Department of Electrical and Computer Engineering, University of Washington, Seattle, WA. Email: luyaoniu@uw.edu}
\thanks{R. Poovendran is with the Network Security Lab, Department of Electrical and Computer Engineering, University of Washington, Seattle, WA. Email: rp3@uw.edu}
\thanks{B. Ramasubramanian is with the Electrical and Computer Engineering Program, Western Washington University,
Bellingham, WA, USA. Email: ramasub@wwu.edu}
\thanks{This research was supported by the AFOSR grants FA9550-22-1-0054 and FA9550-23-1-0208, and NSF grant 2153136.}
}

\maketitle
\thispagestyle{empty}
\begin{abstract}
This paper considers the problem of designing non-pharmaceutical intervention (NPI) strategies, such as masking and social distancing, to slow the spread of a viral epidemic. We formulate the problem of jointly minimizing the infection probabilities of a population and the cost of NPIs based on a Susceptible-Infected-Susceptible (SIS) propagation model. 
To mitigate the complexity of the problem, we consider a steady-state approximation based on the quasi-stationary (endemic) distribution of the epidemic, and prove that the problem of selecting a minimum-cost strategy to satisfy a given bound on the quasi-stationary infection probabilities can be cast as a submodular optimization problem, which can be solved in polynomial time using the greedy algorithm.
We carry out experiments to examine effects of implementing our NPI strategy on propagation and control of epidemics on a Watts-Strogatz small-world graph network. We find the NPI strategy reduces the steady state of infection probabilities of members of the population below a desired threshold value. 
\end{abstract}

\begin{IEEEkeywords}
Epidemics, biological systems, submodular optimization, networked systems.
\end{IEEEkeywords}

\section{Introduction}
\label{sec:intro}
\IEEEPARstart{V}{iral} epidemics have affected millions of people throughout human history, leading to immense loss of life and economic damage.
The need to understand and contain epidemics has spurred the development of mathematical models of disease propagation, such as the Susceptible-Infected-Susceptible (SIS) and Susceptible-Infected-Recovered (SIR) models \cite{pare2020modeling}. These models have also informed strategies for limiting the spread of the epidemic, e.g., by allocating scarce doses of vaccines to the most vulnerable populations \cite{nowzari2016analysis}.

There has been significant research interest on control techniques for selecting optimal intervention strategies. Most of these techniques have focused vaccination-based epidemic mitigation, with the goal of allocating vaccines in order to drive the propagation rate below a threshold level and eventually remove the disease from the population \cite{preciado2013optimal}. On the other hand, recent epidemics such as SARS and COVID-19 have shown that \emph{non-pharmaceutical interventions} (NPIs) such as masking and social distancing can also be effective in slowing the spread of disease, especially in situations where vaccines are unavailable \cite{bouchnita2020hybrid, garcia2022assessing, hens2020covid}. 

The problem of choosing an optimal NPI strategy can be formulated within the control-theoretic framework of changing system dynamics and network structure to control epidemic spread \cite{cenedese2021optimal}.
The challenge that arises is that the set of intervention strategies is inherently discrete and combinatorial, since it involves choosing a subset of network nodes or edges to employ NPIs. The resulting set of intervention strategies has exponential complexity in number of nodes/ edges, and will be difficult to implement in practice. 

In this paper, we propose a submodular optimization approach to apply NPIs to jointly minimize infection probability and cost of mitigation. We consider the network to include clusters that group individuals based on geographic location or institutional affiliation. Instead of selecting specific individuals, we select clusters to apply NPIs based on the perspective of a public health authority. We consider an SIS model in which the effect of NPIs is quantified by a reduction in propagation rate. 
To reduce complexity, we consider the system model based on a mean-field approximation. SIS models are known to converge to a quasi-stationary \emph{endemic} distribution when the propagation rate exceeds an epidemic threshold \cite{van2008virus}, and the endemic distribution can be obtained by solving a system of nonlinear equations. 
We formulate the problem of selecting a minimum-cost intervention strategy to ensure that the steady state infection probability is below a desired threshold, and prove that this problem is submodular. We propose an efficient greedy algorithm for the quasi-stationary model that scales to large networks. Our approach is validated via a simulation study of epidemic propagation on small-world networks \cite{watts1998collective}.

The paper is organized as follows. Section \ref{sec:related} reviews the related work. Section \ref{sec:preliminaries} presents the system model and background on submodularity.
Section \ref{sec:endemic} presents the problem formulation and our approach for optimizing the steady state of infection probability using a submodular approach. Section \ref{sec:simulation} contains simulation results. Section \ref{sec:conclusions} concludes the paper.
\section{Related Work}\label{sec:related}

The modeling and analysis of epidemic processes has been a widely studied area of research. 
The surveys in \cite{pare2020modeling, nowzari2016analysis} (and references therein) provide an excellent overview of the most widely used models and analysis techniques in this domain. 
Both works present a variety of epidemic process models, and analyze key properties of each model such as stability; following this, \cite{pare2020modeling} provides solution approaches to estimate
values of parameters associated with these models, while \cite{nowzari2016analysis} presents optimal control solutions to stop epidemic spread. 
While the spirit of our paper is similar to \cite{nowzari2016analysis}, our approach is significantly different. 
Specifically, our solutions leverage a submodular property inherent to the underlying model structure to develop \emph{computationally efficient NPI strategies} to slow the spread of an epidemic. 

In the continuous-time setting, \cite{van2008virus} proposed a networked SIS dynamics model based on the mean-field approximation; see also \cite{fall2007epidemiological}. 
Concurrently, there is a large body of work emphasizing techniques to mitigate the spread of diseases in networked systems. These solutions can be broadly classified into methods which manipulate `edges' and those that manipulate `nodes'. 
Distributed link removal strategies were designed in \cite{xue2018distributed, liu2021distributed} to control the spread of epidemics processes represented by SIS or SIR models. 
On the other hand, \cite{drakopoulos2014efficient} devises a curing strategy that critically relies on strategic elimination of infected nodes in a SIS model to control disease spread. 
They subsequently show that this mechanism can be used to ensure that the epidemic dies out at a chosen rate. 

Several works have examined developing solutions to control epidemic spread when there is a constraint on resources/ costs. 
Particular solution approaches include minimizing steady state values of nodes \cite{gourdin2011optimization}, geometric programming-based solutions \cite{preciado2014optimal}, mitigating disease spread while maximizing recovery speed under a cost \cite{mai2018distributed} or vaccine deployment \cite{preciado2013optimal} budget, and optimal control-based solutions \cite{grandits2019optimal}. 
Insights from model-predictive control were leveraged to formulate disease spread strategies in \cite{kohler2018dynamic, carli2020model, cenedese2021optimal}. 

To the best of our knowledge, there is limited work on the design of computationally efficient NPI strategies.
The closest work related to ours is \cite{watkins2018control}, where the authors use a submodular approach combined with model predictive control to identify infected nodes to be removed. 
The design and use of NPI strategies to control disease spread have been investigated in the context of the recent COVID-19 pandemic in \cite{bouchnita2020hybrid, garcia2022assessing, hens2020covid}. 
The authors of \cite{bouchnita2020hybrid} showed that panic situations exacerbated risk of disease transmission in crowds despite social distancing. 
All of these works presented extensive empirical analysis of data that was available as a consequence of COVID-19. Furthermore, some works analyze the control of epidemic outbreaks based on metapopulation models, in which `nodes' in the network denote large subpopulations \cite{preciado2013traffic, ye2020network}. 
In comparison, in this paper, we consider a network characterized by clustering and provide a comprehensive modeling and analytical framework to select clusters of nodes in a graph to implement NPI strategies in a cost-effective manner. 
Our solutions are grounded on submodular optimization, which enables us to use a greedy optimization algorithm with provable guarantees on performance.  
\section{Preliminaries}
\label{sec:preliminaries}
This section presents notation, the epidemic model that we consider, and necessary background on submodularity.

\subsection{Notations} 
We use $|\bullet|$ to denote the cardinality of a set. A boldface letter, such as $\boldsymbol{x}$, denotes a vector, and $x_{i}$ denotes the $i$-th entry in $\boldsymbol{x}$. We use $\boldsymbol{x}\leq \boldsymbol{y}$ to denote $x_i \leq y_i$ for all $i$ entries in $\boldsymbol{x}$ and $\boldsymbol{y}$. Let $\boldsymbol{x}< \boldsymbol{y}$ indicate $x_{i}<y_{i}$ for all $i$. For matrices $A$, $\rho(A)$ is the spectral radius of $A$. We define $\overline{A} \geq A$ if $\overline{A}_{ij}\geq A_{ij}$ holds for each entry in $A$ and $\overline{A}$.
\subsection{Epidemic Model}
\label{sec:model}
We consider a population of $n$ individuals, indexed in the set $V = \{1,\ldots,n\}$. We say that an edge exists between two individuals $i$ and $j$ if they interact, and let $E$ denote the set of edges. We let $G=(V,E)$ denote a graph with node set $V$ and edge set $E$, which captures the pairwise social interactions in the population. We assume $G$ is a connected undirected graph. We let $N(i) \triangleq \{j : (i,j) \in E\}$ denote the set of neighbors of $i$.
The epidemic propagates according to a Susceptible-Infected-Susceptible (SIS) process, defined as follows \cite{pare2020modeling}. 
The state of a node in the network is either infected or cured. We use two independent Poisson processes to describe the curing and infection of node $i\in V$. The rates for these two Poisson processes are $\gamma_i$ and $\beta_i$, respectively. We assume $\gamma_i>0$ and $\beta_i>0$. A mean-field approximation can be applied to describe the evolution of the infection probability of $i\in V$.
In this approximation, each node $i$ has a state $x_{i}(t)\in [0,1]$ with $\boldsymbol{x}(0)\neq \boldsymbol{0}$, denoting the probability of infection. The evolution of $x_{i}(t)$ can be described as 
\begin{equation}
    \dot{x}_{i}(t)=-\gamma_i x_{i}(t)+(1-x_{i}(t))\beta_{i}\sum_{j\in N(i)}a_{ij}x_j(t).
\end{equation}
Let $X(t) = diag(x_1(t), \ldots, x_n(t))$, $B = diag(\beta_1, \ldots, \beta_n)$, and $D = diag(\gamma_1, \ldots, \gamma_n)$. The dynamics of the entire system are
\begin{equation}
\label{eq:system-dynamics}
    \dot{\boldsymbol{x}}(t) = (BA-D)\boldsymbol{x}(t) - X(t) BA \boldsymbol{x}(t),
\end{equation}
where $A = [a_{ij}]$ denotes an adjacency matrix with $a_{ij}\geq 0$ \cite{pare2020modeling}.
The value of $a_{ij}$ quantifies strength of a connection between $i$ and $j$. We assume $a_{ij} = a_{ji}>0$ if $j\in N(i)$ and  $a_{ij}=0$ if $j\notin N(i)$. For states in \eqref{eq:system-dynamics}, if $x_{i}(0)\in [0,1]$, $\forall i\in V$, then $x_{i}(t)\in [0,1]$ for all $t\geq 0$ \cite{van2008virus}.
The system with dynamics \eqref{eq:system-dynamics} has a trivial equilibrium point, $\boldsymbol{x}=\boldsymbol{0}$, called the \emph{disease-free} state. However, in some cases, system states will converge to another equilibrium point $\boldsymbol{x}^{\ast}> \boldsymbol{0}$, termed an \emph{endemic} state. The existence of the endemic state is dependent on parameters of the system. Let $\mathcal{R}_0 = \rho(D^{-1}BA)$. For system \eqref{eq:system-dynamics}, if $\boldsymbol{x}(0)\neq 0$ and $\mathcal{R}_0\leq 1$, then $\boldsymbol{x}=0$ is an asymptotically stable equilibrium point with region of attraction $[0,1]^{n}$. Otherwise, if $\mathcal{R}_0> 1$, then $\boldsymbol{x}^{\ast}>0$ is globally asymptotically stable \cite[Thm. 2]{khanafer2016stability}.

We next describe the impact of NPIs. Such an intervention aims to lower the strength of the connection between $i$ and $j$ by taking precautions such as wearing masks. Specifically, if individual $i$ and individual $j$ take the NPIs for their interaction, then edge weights between them will decrease to $a_{ij}(1-\theta_2)$. Otherwise, if only one individual among $i$ and $j$ takes the NPI, the edge weights between them will decrease to $a_{ij}(1-\theta_1)$,
where $\theta_1, \theta_2\in (0.5,1)$ denote the effect of NPIs with $\theta_1 <\theta_2$ \cite{ueki2020effectiveness}. 
This implies the effect of NPIs is more significant when both individuals take it than when only one of them applies it.

In the present paper, we assume the network includes several clusters of nodes. The NPI strategy is to select clusters and let the individuals in these clusters take the NPIs during the interaction with their neighbors. For example, clusters may represent individuals who are from the same geographic area or affiliated with the same firm or university. Assuming those clusters are indexed in the set $S=\{1, \ldots, M\}$, we consider the perspective of a public health authority who must 
 select a subset of clusters $\overline{S}\subseteq S$ to apply the NPI strategy. Note that the case where each node $i$ can be individually incentivized to use NPIs by the health authority corresponds to the case where each cluster is a singleton set $\{i\}$. We use $V_{r}\subseteq V$ to denote the nodes in cluster $r\in S$. When we apply NPIs on the clusters set $\overline{S}$, the corresponding selected nodes set is defined as $\overline{V}(\overline{S})=\cup_{r\in \overline{S}}V_r$. When applying NPIs, the dynamics of node $i$ change to 
 \begin{equation}
 \label{eq:CT-node-dynamics}
    \dot{x}_{i}(t)=-\gamma_i x_{i}(t)+(1-x_{i}(t))\beta_{i}\sum_{j\in N(i)}a_{ij}(\overline{S})x_j(t),
\end{equation}
where $a_{ij}(\overline{S})$ is defined by
\begin{align*}
        &a_{ij}(\overline{S})=\left\{
        \begin{array}{ll}
        a_{ij}, & i,j\notin \overline{V}(\overline{S})\\
        a_{ij}(1-\theta_2), & i,j\in \overline{V}(\overline{S}) \\
        a_{ij}(1-\theta_1), & \mbox{otherwise}
        \end{array}
        \right.   
\end{align*}
Therefore, the dynamics of the entire system are 
\begin{equation}
\label{eq:DT-system-dynamics}
     \dot{\boldsymbol{x}}(t) = (BA(\overline{S})-D)\boldsymbol{x}(t) - X(t) BA(\overline{S}) \boldsymbol{x}(t),
\end{equation}
where $A(\overline{S}) = [a_{ij}(\overline{S})]$.

We next define the cost of applying the NPIs. This cost represents the economic cost and the desire of individuals to maintain regular social interaction based on different scenarios.
We consider three metrics with the corresponding cost of the node and the total NPI cost as follows. 
\begin{itemize}
    \item Additive cost: Let $c_{r,1}$ denote the financial incentive paid for applying NPI  to each individual in cluster $r$. If $i$ is included in multiple selected clusters, it will receive funds from all those clusters. 
    The cost of node is $\tilde{c}_{i}(\overline{S})=\sum_{r^{\prime}\in \{r\in \overline{S}: i\in V_r\}}{c_{{r^{\prime},1}}}$. 
    Hence, the NPI cost of this metric is  $\mathcal{C}_{1}(\overline{S}) = \sum_{r\in \overline{S}}c_{r,1}|V_r|$. 
    \item Maximum cost: Let $c_{r,2}$ denote the economic/social cost of each individual in cluster $r$ due to NPIs (e.g., lost wages due to remaining at home). If $i$ is included in multiple clusters with different levels of desire to socialize, its cost is decided by the maximum cost of a node in those clusters, i.e., cost of node $i$ is $\overline{c}_{i}(\overline{S})=\max{\{c_{r,2}: r\in \overline{S}, i\in V_r\}}$ and the NPI cost %of this metric 
    is $\mathcal{C}_2(\overline{S}) = \sum_{i\in \overline{V}(\overline{S})}\overline{c}_{i}(\overline{S})$, 
    where $\overline{V}(\overline{S}) = \cup_{r\in \overline{S}}V_{r}$.
    \item Identical cost: We consider the expense of distributing masks to all the individuals in selected clusters. We assume masks have the identical price as $c_0$. Hence, if $i$ is included in any selected clusters, the cost of node $i$ is $c_0$. 
    The NPI cost of this metric is $\mathcal{C}_3(\overline{S}) = \sum_{i\in \overline{V}(\overline{S})}c_0$. 
\end{itemize}
Therefore, the total cost of NPI strategy is $\mathcal{C}(\overline{S}) = \alpha_1 \mathcal{C}_1(\overline{S}) + \alpha_2 \mathcal{C}_2(\overline{S}) + \alpha_3 \mathcal{C}_3(\overline{S}),$
where $\alpha_1$, $\alpha_2$, and $\alpha_3$ are nonnegative coefficients that are used to trade off these three metrics.

\subsection{Background on Submodularity}
Let $Z$ be a finite set. A function $f: 2^{Z} \rightarrow \mathbb{R}$ is submodular if, for any $X \subseteq Y \subseteq Z$ and $v \in Z\setminus Y$, we have $$f(X \cup \{v\}) - f(X) \geq f(Y \cup \{v\})-f(Y).$$ 
A function $f: 2^{Z} \rightarrow \mathbb{R}$ is monotone nonincreasing (resp. nondecreasing) if, for any sets $X, Y$ with $X \subseteq Y$, we have $f(X) \geq f(Y)$ (resp. $f(Y) \leq f(X)$). A function $f$ is supermodular if $-f$ is submodular. Nonnegative weighted sums of supermodular (resp. submodular) functions are supermodular (resp. submodular) \cite{fujishige2005submodular}. Furthermore, if $f(X)$ is nonincreasing and supermodular, then $\max{\{f(X),c\}}$ is nonincreasing and supermodular for any real number $c$ \cite{fujishige2005submodular}.
\section{Endemic State Minimization}
\label{sec:endemic}
This section formulates the problem of choosing an optimal cluster set to apply the NPI strategy for a population. We first give the problem formulation and prove it is equivalent to a submodular covering problem, and then we present our framework for solving the problem.

In the present paper, we consider a scenario in which $x_{i}(0)\in [0,1]$, $\forall i\in V$ with $\boldsymbol{x}(0)\neq \boldsymbol{0}$, and the system (\ref{eq:system-dynamics}) has an endemic state which is globally asymptotically stable \cite{khanafer2016stability}.
Specifically, we study the problem of selecting clusters $\overline{S}$ to apply NPI to minimize the cost of this strategy while ensuring that the steady state, either disease-free state or endemic state, is lower than a given bound.

Based on the system dynamics in (\ref{eq:CT-node-dynamics}), we have for all $i\in V$, the steady state $\boldsymbol{x}$ satisfies the condition $-\gamma_{i} x_{i}+(1-x_{i})\sum_{j\in N(i)}\lambda_{ij}(\overline{S})x_j=0,$
where $\lambda_{ij}(\overline{S}) = \beta_{i}a_{ij}(\overline{S})$.
Define the function $J_{i}(\overline{S};\boldsymbol{x})$ as
$$J_{i}(\overline{S};\boldsymbol{x}) = -\gamma_i x_{i}+(1-x_{i})\sum_{j\in N(i)}\lambda_{ij}(\overline{S})x_j.$$ Therefore, given a desired endemic state $\boldsymbol{\hat{x}}>0$, the problem can be formulated as
\begin{equation}
\label{eq:quasi-stationary-problem}
\begin{array}{ll}
\underset{\overline{S}}{\mbox{minimize}} & \mathcal{C}(\overline{S})\\
\mbox{s.t.} & \boldsymbol{\hat{x}} \geq \boldsymbol{x}^{\ast}\\
&J_{i}(\overline{S};\boldsymbol{x}^{\ast})=0, \forall i\in V,
\end{array}
\end{equation}
where $\boldsymbol{x}^{\ast}$ denotes the steady state under the strategy $\overline{S}$.  
Our approach to solving (\ref{eq:quasi-stationary-problem}) will first prove a monotonicity condition, namely, if we relax the constraints of (\ref{eq:quasi-stationary-problem}) to 
\begin{equation}
\label{eq:relaxed-condition}
J_{i}(\overline{S};\boldsymbol{\hat{x}})\leq0, \forall i\in V, 
\end{equation}
then any NPI strategy satisfying (\ref{eq:relaxed-condition}) results in 
steady state $\boldsymbol{x}^{\ast}$ that is element-wise bounded by $\boldsymbol{\hat{x}}$. We will then prove that the relaxed constraint (\ref{eq:relaxed-condition}) is equivalent to a submodular function constraint. Let $\lambda(\overline{S}) = [\lambda_{ij}(\overline{S})]$, we have $\mathcal{R}_{0}(\overline{S}) = \rho(D^{-1}\lambda(\overline{S}))$ in system \eqref{eq:DT-system-dynamics}.
We have the following preliminary result \cite[Theorem 5]{van2008virus}. 
\begin{theorem}[\cite{van2008virus}]
\label{theorem:continued-fraction}
Suppose that $\mathcal{R}_{0}(\overline{S})\geq 1$. The nonzero steady-state infection probability of any node $i$ can be expressed as a continued fraction
\begin{equation}
    \label{eq:continued-fraction}
    {x}^{\ast}_{i} = 1 - \frac{1}{1+\mu_{i} d_{i} - \mu_{i} \sum_{j \in N(i)}{\frac{\lambda_{ij}(\overline{S})}{1+\mu_{j} d_{j} - \mu_{j}\sum_{k \in N(j)}{\frac{\lambda_{jk}(\overline{S})}{\ddots}}}}},
\end{equation}
where $\mu_{i}=\frac{1}{\gamma_{i}}$ and $d_{i} = \sum_{j \in N(i)}{\lambda_{ij}(\overline{S})}$.
\end{theorem}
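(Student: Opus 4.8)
The plan is to derive \eqref{eq:continued-fraction} by algebraically rearranging the steady-state conditions $J_i(\overline{S};\boldsymbol{x}^{\ast})=0$, $i\in V$, and then iterating the resulting self-consistent relation along the graph $G$. First, I would record that the nonzero steady state satisfies $0<x_i^{\ast}<1$ for every $i$: strict positivity is built into the definition of the endemic state (and, for a connected $G$ with $\lambda_{ij}(\overline{S})>0$ on edges, would also follow from $J_i(\overline{S};\boldsymbol{x}^{\ast})=0$ by propagation), while $x_i^{\ast}=1$ is impossible since it would give $J_i(\overline{S};\boldsymbol{x}^{\ast})=-\gamma_i\neq 0$. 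Hence $y_i\triangleq x_i^{\ast}/(1-x_i^{\ast})$ is well defined and nonnegative, with $x_i^{\ast}=1-\tfrac{1}{1+y_i}$ and $1+y_i\ge 1$; the hypothesis $\mathcal{R}_0(\overline{S})\ge 1$ is what guarantees that such a nonzero steady state exists at all (globally attracting when $\mathcal{R}_0(\overline{S})>1$, by \cite[Thm.~2]{khanafer2016stability}).

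Second, I would rewrite $J_i(\overline{S};\boldsymbol{x}^{\ast})=0$ as $\gamma_i x_i^{\ast}=(1-x_i^{\ast})\sum_{j\in N(i)}\lambda_{ij}(\overline{S})x_j^{\ast}$, divide by $\gamma_i(1-x_i^{\ast})$, and substitute $x_j^{\ast}=1-\tfrac{1}{1+y_j}$ on the right-hand side. Using $\mu_i=1/\gamma_i$ and $d_i=\sum_{j\in N(i)}\lambda_{ij}(\overline{S})$, this collapses to
\begin{equation*}
1+y_i = 1+\mu_i d_i-\mu_i\sum_{j\in N(i)}\frac{\lambda_{ij}(\overline{S})}{1+y_j},
\end{equation*}
so that $x_i^{\ast}=1-\dfrac{1}{\,1+\mu_i d_i-\mu_i\sum_{j\in N(i)}\lambda_{ij}(\overline{S})/(1+y_j)\,}$. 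Replacing each factor $1+y_j$ occurring in the denominator by the right-hand side of the displayed identity, then replacing each $1+y_k$ that newly appears, and repeating, reproduces the nested expression in \eqref{eq:continued-fraction}.

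Because $G$ contains cycles, this substitution never terminates, so \eqref{eq:continued-fraction} must be interpreted as the limit of its finite truncations, and showing that the limit exists and equals $x_i^{\ast}$ is the main obstacle. I would identify the depth-$L$ truncation---with the deepest denominators set to $1+\mu_i d_i$, i.e.\ obtained by dropping the innermost sum---with the $L$-th iterate of the map $\psi(\boldsymbol{D})_i=1+\mu_i d_i-\mu_i\sum_{j\in N(i)}\lambda_{ij}(\overline{S})/D_j$. This $\psi$ is order-preserving (its Jacobian entries $\mu_i\lambda_{ij}(\overline{S})/D_j^{2}$ are positive), it maps $[1,1+\Delta]^{n}$ into itself for $\Delta=\max_i\mu_i d_i$, and it fixes $\boldsymbol{D}^{\ast}=\boldsymbol{1}+\boldsymbol{y}$; since $y_i\le\mu_i d_i$, the chosen initial tail dominates $\boldsymbol{D}^{\ast}$, so the iterates decrease monotonically and remain $\ge\boldsymbol{1}+\boldsymbol{y}$, hence converge to a fixed point of $\psi$ that is $>\boldsymbol{1}$. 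Finally, the fixed points of $\psi$ correspond, via $x_i=1-1/D_i$, to the nonnegative equilibria of \eqref{eq:CT-node-dynamics}, which when $\mathcal{R}_0(\overline{S})>1$ are only $\boldsymbol{0}$ (i.e.\ $\boldsymbol{D}=\boldsymbol{1}$) and $\boldsymbol{x}^{\ast}$ (i.e.\ $\boldsymbol{D}=\boldsymbol{1}+\boldsymbol{y}$), so the limit must be $\boldsymbol{1}+\boldsymbol{y}$ and the truncations converge to $1-1/(1+y_i)=x_i^{\ast}$. I expect the delicate points to be choosing the truncation convention so that the iteration does not collapse onto the spurious disease-free fixed point $\boldsymbol{D}=\boldsymbol{1}$, and establishing uniqueness of the endemic equilibrium---which is exactly where $\mathcal{R}_0(\overline{S})\ge 1$ and the stability result of \cite{khanafer2016stability} are used; alternatively, one can follow the monotone upper/lower-bound argument of \cite{van2008virus}.
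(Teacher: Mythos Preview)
The paper does not supply its own proof of this statement; Theorem~\ref{theorem:continued-fraction} is quoted verbatim as \cite[Theorem~5]{van2008virus} and used as a black box. Your derivation is correct, and in fact the truncation map $\psi$ you introduce coincides exactly with the sequence $\phi_i^{l}$ that the paper later defines---without justification---in the proof of Theorem~\ref{theorem:monotonicity}, where it simply asserts $x_i^{\ast}=\lim_{l\to\infty}\bigl(1-1/\phi_i^{l}\bigr)$ by appeal to Theorem~\ref{theorem:continued-fraction}. Thus your argument not only reproduces the cited result but also supplies the convergence justification the paper tacitly relies on downstream. The monotone-iteration step---showing that the truncations, started at $D_i^{0}=1+\mu_i d_i$, decrease to a fixed point $D^{\infty}\ge D^{\ast}=\boldsymbol{1}+\boldsymbol{y}>\boldsymbol{1}$, and then invoking uniqueness of the nonzero equilibrium to force $D^{\infty}=D^{\ast}$ rather than the spurious disease-free fixed point $\boldsymbol{1}$---is the nontrivial ingredient, and you handle it correctly.
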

We use $\boldsymbol{x}^{\ast}(\lambda)$ to denote the steady state for the system arising from the set of propagation rates $\lambda(\overline{S})$. We use $\lambda$ to denote $\lambda(\overline{S})$ for simplicity. Define $\overline{\lambda} = [\overline{\lambda}_{ij}]$ with $\overline{\lambda}_{ij}=\beta_{i}\alpha_{ij}\epsilon_{ij}$, where $\epsilon_{ij}>0$. We have the following result.
\begin{theorem}
    \label{theorem:monotonicity}
    Suppose that $\lambda \leq \overline{\lambda}$. Then, ${\boldsymbol{x}}^{\ast}(\lambda) \leq {\boldsymbol{x}}^{\ast}(\overline{\lambda})$.
\end{theorem}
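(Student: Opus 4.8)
The plan is to recast the endemic equation $J_i(\overline{S};\boldsymbol{x}^{\ast})=0$ as a fixed point of a map that is monotone both in the state and in the propagation rates, and then propagate that monotonicity through a monotone fixed-point iteration started at $\boldsymbol{x}^{\ast}(\lambda)$. First I would rewrite $-\gamma_i x_i + (1-x_i)\sum_{j\in N(i)}\lambda_{ij}x_j = 0$ in the equivalent form $x_i = \frac{\mu_i s_i(\boldsymbol{x})}{1+\mu_i s_i(\boldsymbol{x})}$, where $\mu_i = 1/\gamma_i$ and $s_i(\boldsymbol{x}) = \sum_{j\in N(i)}\lambda_{ij}x_j$; this is exactly the top level of the continued fraction in Theorem~\ref{theorem:continued-fraction}. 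Define $F_\lambda:[0,1]^n\to[0,1]^n$ by $(F_\lambda(\boldsymbol{x}))_i = \frac{\mu_i s_i(\boldsymbol{x})}{1+\mu_i s_i(\boldsymbol{x})}$. Since $t\mapsto t/(1+t)$ is increasing on $[0,\infty)$ and $s_i(\boldsymbol{x})$ is nondecreasing in every $x_j$ (as $\lambda_{ij}\geq 0$) and in every $\lambda_{ij}$ (as $x_j\geq 0$), the map $F_\lambda$ is (i) monotone nondecreasing in $\boldsymbol{x}$, (ii) monotone nondecreasing in $\lambda$, i.e. $\lambda\leq\overline{\lambda}$ implies $F_\lambda(\boldsymbol{x})\leq F_{\overline{\lambda}}(\boldsymbol{x})$ for all $\boldsymbol{x}\geq\boldsymbol{0}$, and (iii) maps $[0,1]^n$ into itself. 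Moreover $\boldsymbol{x}^{\ast}(\lambda)$ is a fixed point of $F_\lambda$ by construction, and by \cite[Thm.~2]{khanafer2016stability} it is the \emph{unique positive} fixed point whenever $\mathcal{R}_0(\lambda)>1$.

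Next I would split into cases according to $\mathcal{R}_0(\lambda)$. Because $D^{-1}\lambda\leq D^{-1}\overline{\lambda}$ entrywise for nonnegative matrices, Perron--Frobenius monotonicity of the spectral radius gives $\mathcal{R}_0(\lambda)=\rho(D^{-1}\lambda)\leq\rho(D^{-1}\overline{\lambda})=\mathcal{R}_0(\overline{\lambda})$. If $\mathcal{R}_0(\lambda)\leq 1$, then $\boldsymbol{x}^{\ast}(\lambda)=\boldsymbol{0}\leq\boldsymbol{x}^{\ast}(\overline{\lambda})$ and the claim is immediate. So assume $\mathcal{R}_0(\lambda)>1$, hence also $\mathcal{R}_0(\overline{\lambda})>1$, so both systems have a unique positive endemic equilibrium. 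Run the iteration $\boldsymbol{x}^{(0)}=\boldsymbol{x}^{\ast}(\lambda)$, $\boldsymbol{x}^{(k+1)}=F_{\overline{\lambda}}(\boldsymbol{x}^{(k)})$: property (ii) together with $F_\lambda(\boldsymbol{x}^{(0)})=\boldsymbol{x}^{(0)}$ yields $\boldsymbol{x}^{(1)}=F_{\overline{\lambda}}(\boldsymbol{x}^{(0)})\geq F_\lambda(\boldsymbol{x}^{(0)})=\boldsymbol{x}^{(0)}$, and then (i) plus induction make $\{\boldsymbol{x}^{(k)}\}$ nondecreasing, while (iii) bounds it above by $\boldsymbol{1}$, so it converges to some $\boldsymbol{x}^{(\infty)}$. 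By continuity of $F_{\overline{\lambda}}$, $\boldsymbol{x}^{(\infty)}$ is a fixed point of $F_{\overline{\lambda}}$; since $\boldsymbol{x}^{(\infty)}\geq\boldsymbol{x}^{(0)}=\boldsymbol{x}^{\ast}(\lambda)>\boldsymbol{0}$ it is positive, so by uniqueness $\boldsymbol{x}^{(\infty)}=\boldsymbol{x}^{\ast}(\overline{\lambda})$. Hence $\boldsymbol{x}^{\ast}(\lambda)=\boldsymbol{x}^{(0)}\leq\boldsymbol{x}^{(\infty)}=\boldsymbol{x}^{\ast}(\overline{\lambda})$.

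The step I expect to be the main obstacle is identifying the limit of the monotone iteration with the endemic equilibrium rather than with the disease-free one: this needs (a) the strict lower bound $\boldsymbol{x}^{(\infty)}\geq\boldsymbol{x}^{\ast}(\lambda)>\boldsymbol{0}$ to exclude collapse to $\boldsymbol{0}$, and (b) uniqueness of the positive equilibrium, which I would import from the global asymptotic stability result \cite[Thm.~2]{khanafer2016stability}. An equivalent route that avoids the discrete iteration is a comparison-principle argument: the mean-field system \eqref{eq:DT-system-dynamics} is cooperative on $[0,1]^n$ (its off-diagonal Jacobian entries $(1-x_i)\beta_i a_{ij}(\overline{S})$ are nonnegative) and monotone in $\lambda$, so the $\overline{\lambda}$-trajectory initialized at $\boldsymbol{x}^{\ast}(\lambda)$ is nondecreasing in $t$, bounded, and therefore converges monotonically upward to $\boldsymbol{x}^{\ast}(\overline{\lambda})$; the positivity/uniqueness considerations in (a)--(b) remain the crux in that formulation as well.
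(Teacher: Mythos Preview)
Your argument is correct. Both your proof and the paper's rest on the same monotone fixed-point map $F_{\lambda}(\boldsymbol{x})_i=\mu_i s_i(\boldsymbol{x})/(1+\mu_i s_i(\boldsymbol{x}))$, but the routes to the conclusion differ. The paper works with the continued-fraction truncations $\phi_i^{l}$ of Theorem~\ref{theorem:continued-fraction} (which, after the change of variables $x_i^{(l)}=1-1/\phi_i^{l}$, is precisely the iteration $\boldsymbol{x}^{(l)}=F_\lambda(\boldsymbol{x}^{(l-1)})$ started at $\boldsymbol{1}$); it shows by induction that each $\phi_i^{l}(\lambda)$ is nondecreasing in $\lambda$ and then passes to the limit, so the comparison is made level-by-level between two parallel iterations for $\lambda$ and $\overline{\lambda}$. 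You instead start a single $F_{\overline{\lambda}}$-iteration at the fixed point $\boldsymbol{x}^{\ast}(\lambda)$, obtain an increasing bounded sequence, and identify its limit with $\boldsymbol{x}^{\ast}(\overline{\lambda})$ via uniqueness of the positive equilibrium \cite[Thm.~2]{khanafer2016stability}. Your approach avoids explicit use of the continued-fraction representation and is closer to a standard monotone-operator or cooperative-systems argument; the paper's approach is more self-contained in that convergence and identification of the limit come from Theorem~\ref{theorem:continued-fraction} itself rather than from an external stability result. Your cooperative-ODE variant is also valid: since $J_i(\overline{\lambda};\boldsymbol{x}^{\ast}(\lambda))=(1-x_i^{\ast})\sum_{j}(\overline{\lambda}_{ij}-\lambda_{ij})x_j^{\ast}\geq 0$, the $\overline{\lambda}$-trajectory from $\boldsymbol{x}^{\ast}(\lambda)$ is nondecreasing and converges to $\boldsymbol{x}^{\ast}(\overline{\lambda})$ by global asymptotic stability.
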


\begin{proof}
We have $D^{-1}{\lambda}$ and $D^{-1}\overline{\lambda}$ are both nonnegative matrices with $0\leq D^{-1}{\lambda}\leq D^{-1}{\overline{\lambda}}$. 
By \cite[Corollary 8.1.19]{horn2012matrix}, we then have $\rho(D^{-1}{\lambda})\leq \rho(D^{-1}{\overline{\lambda}})$. We divide the proof into three cases: (i) $\rho(D^{-1}{\overline{\lambda}})\leq 1$, (ii) $\rho(D^{-1}{\lambda})\leq 1 < \rho(D^{-1}{\overline{\lambda}})$, and (iii) $1<\rho(D^{-1}{\lambda})\leq \rho(D^{-1}{\overline{\lambda}})$.

Case (i): We have $\rho(D^{-1}{\lambda}) \leq \rho(D^{-1}{\overline{\lambda}})\leq 1$ implies $\boldsymbol{x}^{\ast}({\lambda})= \boldsymbol{x}^{\ast}(\overline{\lambda})=0$.

Case (ii): We have $\rho(D^{-1}{\lambda})\leq 1 < \rho(D^{-1}{\overline{\lambda}})$ implies $\boldsymbol{x}^{\ast}({\lambda})=0< \boldsymbol{x}^{\ast}(\overline{\lambda})$.

Case (iii): We have $1<\rho(D^{-1}{\lambda})\leq \rho(D^{-1}{\overline{\lambda}})$ implies $\boldsymbol{x}^{\ast}({\lambda})>0$ and $\boldsymbol{x}^{\ast}(\overline{\lambda})>0$.

To prove ${\boldsymbol{x}}^{\ast}(\lambda) \leq {\boldsymbol{x}}^{\ast}(\overline{\lambda})$, we first define a sequence of functions $\phi_{i}^{l}(\lambda)$ for $i=1,\ldots,n$ and $l=1,2,\ldots$ by $\phi_{i}^{1}(\lambda) = 1 + \mu_{i} d_{i}$ and $\phi_{i}^{l}(\lambda) = 1 + \mu_{i} d_{i} - \mu_{i}\sum_{j \in N(i)}{\frac{\lambda_{ij}}{\phi_{j}^{l-1}(\lambda)}}.$ By Theorem \ref{theorem:continued-fraction}, we have 
\begin{equation}
    \label{eq:ss-limit}
    {x}^{\ast}_{i}(\lambda) = \lim_{l \rightarrow \infty}{\left(1-\frac{1}{\phi_{i}^{l}(\lambda)}\right)}.
\end{equation}

We will first show that, for all $i,l$, $\phi_{i}^{l}(\lambda)$ is a nondecreasing function in $\lambda$. The proof is by induction. We have that $\phi_{i}^{1}(\lambda) = 1 + \mu_{i}\sum_{j \in N(i)}{\lambda_{ij}},$ which is clearly nondecreasing in $\lambda$. Proceeding inductively, 
\begin{eqnarray*}
    \phi_{i}^{l}(\lambda) &=& 1 + \mu_{i}\sum_{j \in N(i)}{\lambda_{ij}} - \mu_{i}\sum_{j \in N(i)}{\frac{\lambda_{ij}}{\phi_{j}^{l-1}(\lambda)}} \\
    &=& 1 + \mu_{i}\sum_{j \in N(i)}{\lambda_{ij}\left(1 - \frac{1}{\phi_{j}^{l-1}(\lambda)}\right)}.
\end{eqnarray*}
Since $\phi_{j}^{l-1}(\lambda)$ is nondecreasing in $\lambda$, we have that $\left(1 - \frac{1}{\phi_{j}^{l-1}(\lambda)}\right)$ is nondecreasing in $\lambda$. Furthermore, $\lambda_{ij}$ is nondecreasing in $\lambda$, and hence $\phi_{i}^{l}$ is nondecreasing in $\lambda$. We thus have (by (\ref{eq:ss-limit}), ${x}_{i}^{\ast}(\lambda)$ is a pointwise limit of nondecreasing functions of $\lambda$, and hence is nondecreasing in $\lambda$, implying ${\boldsymbol{x}}^{\ast}(\lambda) \leq {\boldsymbol{x}}^{\ast}(\overline{\lambda})$, completing the proof.
\end{proof}

With the result in Theorem \ref{theorem:monotonicity}, we prove (\ref{eq:relaxed-condition}) is a sufficient condition for steady state $\boldsymbol{x}^{\ast}(\lambda(\overline{S}))$ to be bounded below $\boldsymbol{\hat{x}}$.
\begin{proposition}
\label{prop:spreading-sufficient-condition}
Suppose that, for a given $\hat{\boldsymbol{x}}$, the NPI strategy $\overline{S}$ satisfies $J_{i}(\overline{S},\boldsymbol{\hat{x}}) \leq 0$ for all $i \in V$. Then $\boldsymbol{x}^{\ast}(\lambda(\overline{S})) \leq \hat{\boldsymbol{x}}$.
\end{proposition}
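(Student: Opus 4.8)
The plan is to produce a set of propagation rates $\overline{\lambda} \geq \lambda(\overline{S})$ whose endemic steady state is exactly $\hat{\boldsymbol{x}}$, and then conclude by Theorem \ref{theorem:monotonicity}. We may assume $\hat{x}_i \in (0,1)$ for every $i$: any index with $\hat{x}_i = 1$ imposes no real constraint, since every endemic state satisfies $x_i^\ast < 1$ by the continued fraction \eqref{eq:continued-fraction}, so there $x_i^\ast \le \hat{x}_i$ holds automatically.

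First I would rewrite the hypothesis: for each $i$, $J_i(\overline{S};\hat{\boldsymbol{x}}) \le 0$ is equivalent to $\sum_{j\in N(i)} \lambda_{ij}(\overline{S})\,\hat{x}_j \le \gamma_i \hat{x}_i / (1-\hat{x}_i)$, and the left-hand side is strictly positive because $\hat{\boldsymbol{x}} > 0$ and $\lambda_{ij}(\overline{S}) > 0$ on $N(i)$ (with $N(i) \neq \emptyset$ since $G$ is connected). Hence $t_i := \big(\gamma_i \hat{x}_i / (1-\hat{x}_i)\big) \big/ \big(\sum_{j\in N(i)} \lambda_{ij}(\overline{S})\hat{x}_j\big)$ is well defined and $t_i \ge 1$. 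Define $\overline{\lambda}_{ij} := t_i\,\lambda_{ij}(\overline{S})$ for $j \in N(i)$ and $\overline{\lambda}_{ij} := 0$ otherwise. Then $\overline{\lambda} \ge \lambda(\overline{S})$ entrywise, $\overline{\lambda}$ has the form required by Theorem \ref{theorem:monotonicity} (take $\epsilon_{ij} = t_i\, a_{ij}(\overline{S})/a_{ij} > 0$ for $j\in N(i)$), and by construction $-\gamma_i\hat{x}_i + (1-\hat{x}_i)\sum_{j\in N(i)}\overline{\lambda}_{ij}\hat{x}_j = 0$ for all $i$; that is, $\hat{\boldsymbol{x}}$ is a strictly positive solution of the steady-state equations of the system driven by $\overline{\lambda}$.

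Next I would argue that this pins down $\boldsymbol{x}^\ast(\overline{\lambda}) = \hat{\boldsymbol{x}}$. If $\rho(D^{-1}\overline{\lambda}) \le 1$, then $\boldsymbol{0}$ is asymptotically stable with region of attraction $[0,1]^n$ \cite{khanafer2016stability}, hence the unique equilibrium in $[0,1]^n$, contradicting that $\hat{\boldsymbol{x}} > 0$ solves those equations; so $\rho(D^{-1}\overline{\lambda}) > 1$, the endemic state is globally asymptotically stable from every nonzero initial condition, and the positive equilibrium $\hat{\boldsymbol{x}}$ must coincide with it. Applying Theorem \ref{theorem:monotonicity} to $\lambda(\overline{S}) \le \overline{\lambda}$ then gives $\boldsymbol{x}^\ast(\lambda(\overline{S})) \le \boldsymbol{x}^\ast(\overline{\lambda}) = \hat{\boldsymbol{x}}$, which is the claim.

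The step I expect to be the main obstacle is the last one: certifying that the constructed $\overline{\lambda}$ really has $\hat{\boldsymbol{x}}$ — rather than the disease-free state or some other root of the steady-state equations — as its steady state $\boldsymbol{x}^\ast(\overline{\lambda})$; this is precisely where the uniqueness/global-stability dichotomy of the SIS equilibrium as a function of $\mathcal{R}_0$ is essential. As a sanity check and alternative that avoids constructing $\overline{\lambda}$, one can observe that \eqref{eq:CT-node-dynamics} is a cooperative (monotone) system on $[0,1]^n$: starting it at $\hat{\boldsymbol{x}}$ we have $\dot{\boldsymbol{x}}(0) = (J_1(\overline{S};\hat{\boldsymbol{x}}),\dots,J_n(\overline{S};\hat{\boldsymbol{x}})) \le \boldsymbol{0}$, so the trajectory is nonincreasing and converges to an equilibrium $\le \hat{\boldsymbol{x}}$, which is $\boldsymbol{0}$ when $\mathcal{R}_0(\overline{S}) \le 1$ and the endemic state $\boldsymbol{x}^\ast(\lambda(\overline{S}))$ otherwise (since $\hat{\boldsymbol{x}} \ne \boldsymbol{0}$); in either case $\boldsymbol{x}^\ast(\lambda(\overline{S})) \le \hat{\boldsymbol{x}}$.
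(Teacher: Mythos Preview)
Your primary argument is essentially the paper's proof: construct $\overline{\lambda} \ge \lambda(\overline{S})$ (by scaling each row so equality holds) for which $\hat{\boldsymbol{x}}$ satisfies the steady-state equations, then invoke Theorem~\ref{theorem:monotonicity}. The paper carries out exactly this construction but simply asserts $\boldsymbol{x}^{\ast}(\tilde{\lambda}) = \hat{\boldsymbol{x}}$ without justification; your discussion of why $\hat{\boldsymbol{x}}$ must be \emph{the} endemic equilibrium (via the $\mathcal{R}_0$ dichotomy and uniqueness of the positive equilibrium) fills a step the paper leaves implicit. Your closing alternative via cooperativity/monotonicity of the SIS flow is a genuinely different route not used in the paper: it avoids constructing $\overline{\lambda}$ and appealing to Theorem~\ref{theorem:monotonicity} altogether, trading those for the standard monotone-systems fact that a trajectory starting with $\dot{\boldsymbol{x}}(0)\le\boldsymbol{0}$ is componentwise nonincreasing; this is arguably cleaner and sidesteps the uniqueness issue you flagged.
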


\begin{proof}
    For each $i$ with $J_{i}(\overline{S};\hat{\boldsymbol{x}}) = 0$, we let $\tilde{\lambda}_{ij}=\lambda_{ij}(\overline{S})$.
    For each $i$ with $J_{i}(\overline{S};\hat{\boldsymbol{x}}) < 0$, we let $\tilde{\lambda}_{ij} \geq \lambda_{ij}(\overline{S})$ such that 
    $-\gamma_i \hat{x}_{i}+(1-\hat{x}_{i})\sum_{j\in N(i)}\tilde{\lambda}_{ij}\hat{x}_{i}=0.$ We have $\lambda(\overline{S}) \leq \tilde{\lambda}$ and $\boldsymbol{x}^{\ast}(\tilde{\lambda}) = \hat{\boldsymbol{x}}$. Thus $\boldsymbol{x}^{\ast}(\lambda(\overline{S})) \leq \boldsymbol{x}^{\ast}(\tilde{\lambda}) = \boldsymbol{\hat{x}}$.
\end{proof}
Therefore, we can recast problem (\ref{eq:quasi-stationary-problem}) as
\begin{equation}
\label{eq:relax-quasi-stationary-problem}
\begin{array}{ll}
\underset{\overline{S}}{\mbox{minimize}} & 
\mathcal{C}(\overline{S})\\
\mbox{s.t.} & \sum_{i=1}^{N} \max\{J_{i}(\overline{S};\boldsymbol{\hat{x}}), 0\}\leq 0
\end{array}
\end{equation}
We next prove the constraint in (\ref{eq:relax-quasi-stationary-problem}) has a submodular structure. 
\begin{lemma}
\label{lemma:lambda}
    The function $\lambda_{ij}(\overline{S})$ is a positive nonincreasing and supermodular function in $\overline{S}$.
\end{lemma}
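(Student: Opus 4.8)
The plan is to exploit the fact that, for a fixed pair $i,j$ with $j\in N(i)$ (so $a_{ij}>0$; if $j\notin N(i)$ then $\lambda_{ij}\equiv0$ and the statement is vacuous), the value $\lambda_{ij}(\overline{S})$ depends on $\overline{S}$ only through how many of the two endpoints lie in the selected node set. Concretely, put $b_i(\overline{S})\triangleq\mathbf{1}[i\in\overline{V}(\overline{S})]$, $b_j(\overline{S})\triangleq\mathbf{1}[j\in\overline{V}(\overline{S})]$, $\sigma(\overline{S})\triangleq b_i(\overline{S})+b_j(\overline{S})\in\{0,1,2\}$, and define $g\colon\{0,1,2\}\to\mathbb{R}$ by $g(0)=\beta_i a_{ij}$, $g(1)=\beta_i a_{ij}(1-\theta_1)$, $g(2)=\beta_i a_{ij}(1-\theta_2)$. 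The three-case definition of $a_{ij}(\overline{S})$ then says precisely that $\lambda_{ij}(\overline{S})=\beta_i a_{ij}(\overline{S})=g(\sigma(\overline{S}))$. Positivity is immediate: $\beta_i>0$, $a_{ij}>0$, and $\theta_1,\theta_2\in(0.5,1)$ give $g(0),g(1),g(2)>0$.

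Next I would record two elementary facts. (a) If $\overline{S}\subseteq\overline{S}'$ then $\overline{V}(\overline{S})\subseteq\overline{V}(\overline{S}')$, so $b_i$, $b_j$, and hence $\sigma$ are monotone nondecreasing; moreover $b_i(\overline{S})=\mathbf{1}[\overline{S}\cap R_i\neq\emptyset]$ with $R_i\triangleq\{r\in S:i\in V_r\}$, and such a ``hitting indicator'' is submodular (its marginal gain from adding a cluster is $1$ as long as no already-chosen cluster contains $i$, and $0$ thereafter, so marginals are nonincreasing in $\overline{S}$); by closure of submodularity under nonnegative sums (Section~\ref{sec:preliminaries}), $\sigma$ is monotone nondecreasing and submodular. (b) Since $0<\theta_1<\theta_2<1$, $g$ is strictly decreasing. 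Facts (a) and (b) together give the monotonicity claim: $\overline{S}\subseteq\overline{S}'\Rightarrow\sigma(\overline{S})\le\sigma(\overline{S}')\Rightarrow\lambda_{ij}(\overline{S})=g(\sigma(\overline{S}))\ge g(\sigma(\overline{S}'))=\lambda_{ij}(\overline{S}')$.

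For supermodularity I would verify the defining inequality for $-\lambda_{ij}$ directly: given $X\subseteq Y\subseteq S$ and $v\in S\setminus Y$, show $\lambda_{ij}(X\cup\{v\})-\lambda_{ij}(X)\le\lambda_{ij}(Y\cup\{v\})-\lambda_{ij}(Y)$. Splitting on whether $V_v$ contains $i$, $j$, both, or neither, every case collapses to a comparison between two values of $g$ that follows either from $g$ being decreasing together with $\sigma(X)\le\sigma(Y)$, or --- in the single nontrivial case $i\in V_v$, $j\notin V_v$ with $b_i(X)=b_i(Y)=0$ and $b_j$ equal to $0$ on $X$ but $1$ on $Y$ --- from the second-difference inequality $g(0)-g(1)\ge g(1)-g(2)$. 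Equivalently and more slickly, one observes that $-g$ is nondecreasing and concave on $\{0,1,2\}$ and invokes the standard fact that a nondecreasing concave scalar function composed with a monotone nondecreasing submodular set function is submodular, so that $-\lambda_{ij}=(-g)\circ\sigma$ is submodular, i.e.\ $\lambda_{ij}$ is supermodular.

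The only place the NPI modelling hypotheses are actually needed --- and hence the step I would flag as the crux --- is the second-difference inequality $g(0)-g(1)\ge g(1)-g(2)$, equivalently the concavity of $-g$, which unwinds to $\beta_i a_{ij}(2\theta_1-\theta_2)\ge0$, i.e.\ $2\theta_1\ge\theta_2$; this holds because $\theta_1\in(0.5,1)$ forces $2\theta_1>1>\theta_2$. Everything else in the argument is bookkeeping. Note that without $2\theta_1\ge\theta_2$ both supermodularity and the concavity of $-g$ would fail, so this is exactly where the assumption $\theta_1,\theta_2\in(0.5,1)$ (together with $\theta_1<\theta_2$) is genuinely used.
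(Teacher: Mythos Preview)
Your proof is correct and arrives at the same pivotal inequality $2\theta_1\ge\theta_2$ (deduced from $\theta_1>0.5$ and $\theta_2<1$) that the paper's proof uses. The route, however, is genuinely different. The paper proceeds by a direct case analysis: fixing $\overline{S}_1\subseteq\overline{S}_2$ and $w\notin\overline{S}_2$, it computes the marginal differences $\Psi_{ij}(\overline{S}_k,\{w\})=\lambda_{ij}(\overline{S}_k\cup\{w\})-\lambda_{ij}(\overline{S}_k)$ by splitting on whether $i$ lies in $\overline{V}_1$, $\overline{V}_2\setminus\overline{V}_1$, or neither, and then on the location of $j$; the nontrivial subcase yields exactly $-\beta_i a_{ij}(2\theta_1-\theta_2)\le 0$. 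Your argument instead factors $\lambda_{ij}(\overline{S})=g(\sigma(\overline{S}))$ with $\sigma$ the number of covered endpoints, shows $\sigma$ is monotone submodular as a sum of hitting indicators, and reduces supermodularity of $\lambda_{ij}$ to concavity of $-g$ via the standard composition lemma (nondecreasing concave scalar composed with monotone submodular is submodular). This buys a cleaner, more conceptual proof that isolates precisely where the modelling assumption $\theta_1,\theta_2\in(0.5,1)$ enters (namely the single second-difference $g(0)-g(1)\ge g(1)-g(2)$); the paper's approach, while more laborious, is self-contained and does not appeal to the composition lemma.
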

\begin{proof}
Assume we have $\overline{S}_{1}\subseteq \overline{S}_{2}\subseteq S$ with $w\in S\setminus \overline{S}_{2}$. We define 
$\overline{V}_{1} = \overline{V}(\overline{S}_1)$ and $\overline{V}_{2} = \overline{V}(\overline{S}_2)$.
We have $i^{\prime}\in \overline{V}_1$ implies $i^{\prime}\in \overline{V}_2$. 
Let $\Psi_{ij}(\overline{S}_{k}, \{w\})\triangleq \lambda_{ij}(\overline{S}_{k}\cup\{w\})-\lambda_{ij}(\overline{S}_{k})$. For each $i\in V_w$ and $j\in N(i)$, if $i\in \overline{V}_1$, then $\Psi_{ij}(\overline{S}_{2}, \{w\})= \Psi_{ij}(\overline{S}_{1}, \{w\})= 0$.
If $i\in \overline{V}_2\setminus \overline{V}_1$, we have $\Psi_{ij}(\overline{S}_{2}, \{w\})=0$, and 
\begin{align*}
        &\Psi_{ij}(\overline{S}_{1}, \{w\})=\left\{
        \begin{array}{ll}
        -\beta_{i}a_{ij}(\theta_2-\theta_1), & j\in \overline{V}_1\\
        -\beta_{i}a_{ij}\theta_1, & j\notin \overline{V}_1
        \end{array}
        \right.   
    \end{align*}
implying $\Psi_{ij}(\overline{S}_{1}, \{w\})\leq \Psi_{ij}(\overline{S}_{2}, \{w\})$.
If $i\notin \overline{V}_2$,
\begin{multline*}
        \Psi_{ij}(\overline{S}_{1}, \{w\}) - \Psi_{ij}(\overline{S}_{2}, \{w\})\\
        =\left\{
        \begin{array}{ll}
        -\beta_{i}a_{ij}(-\theta_2+2\theta_1), & j\in \overline{V}_2\setminus\overline{V}_1\\
        0, & \mbox{otherwise}\\
        \end{array}
        \right.   
    \end{multline*}
implying $\Psi_{ij}(\overline{S}_{1}, \{w\}) - \Psi_{ij}(\overline{S}_{2}, \{w\})\leq 0$. The function $\lambda_{ij}(\overline{S})$ is a supermodular function in $\overline{S}$. 
Furthermore, define $\eta_{ij}(\overline{S}_2, \overline{S}_1)\triangleq \lambda_{ij}(\overline{S}_{2}) - \lambda_{ij}(\overline{S}_{1})$. For $i\in \overline{V}_2\setminus\overline{V}_1$, $j\in N(i)$,
\begin{multline*}
        \eta_{ij}(\overline{S}_2, \overline{S}_1)
        =
        \left\{
        \begin{array}{ll}
        -\beta_{i}a_{ij}(\theta_2-\theta_1), & j\in \overline{V}_1\\
        -\beta_{i}a_{ij}\theta_2, & j\in \overline{V}_2\setminus\overline{V}_1\\
        -\beta_{i}a_{ij}\theta_1, & j\notin \overline{V}_2
        \end{array}
        \right.   
    \end{multline*}
implying the function $\lambda_{ij}(\overline{S})$ is a nonincreasing function in $\overline{S}$, completing the proof.
\end{proof}
We need an additional preliminary result as follows.
\begin{lemma}
\label{lemma:J}
    The function $J_{i}(\overline{S};\boldsymbol{\hat{x}})$ is a nonincreasing and supermodular function in $S$.
\end{lemma}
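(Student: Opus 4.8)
The plan is to exhibit $J_{i}(\overline{S};\boldsymbol{\hat{x}})$, for fixed $\boldsymbol{\hat{x}}$, as a constant plus a nonnegative linear combination of the set functions $\lambda_{ij}(\overline{S})$, and then invoke Lemma~\ref{lemma:lambda} together with the closure properties of supermodular and monotone functions recalled in Section~\ref{sec:preliminaries}.

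First I would hold $\boldsymbol{\hat{x}}$ fixed and rewrite
$J_{i}(\overline{S};\boldsymbol{\hat{x}}) = -\gamma_{i}\hat{x}_{i} + \sum_{j\in N(i)}(1-\hat{x}_{i})\hat{x}_{j}\,\lambda_{ij}(\overline{S})$.
The leading term $-\gamma_{i}\hat{x}_{i}$ does not depend on the chosen cluster set $\overline{S}$, so it is a constant. Next I would verify that each coefficient $(1-\hat{x}_{i})\hat{x}_{j}$ is nonnegative: since the entries of the state vector, and hence of the desired endemic state $\boldsymbol{\hat{x}}$, lie in $[0,1]$, we have $1-\hat{x}_{i}\geq 0$ and $\hat{x}_{j}\geq 0$, so $(1-\hat{x}_{i})\hat{x}_{j}\geq 0$ for every $j\in N(i)$.

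Then, by Lemma~\ref{lemma:lambda}, each $\lambda_{ij}(\overline{S})$ is a nonincreasing and supermodular function of $\overline{S}$. Because nonnegative weighted sums of supermodular (resp.\ monotone nonincreasing) functions are again supermodular (resp.\ monotone nonincreasing), and because adding a constant preserves both properties, it follows that $J_{i}(\overline{S};\boldsymbol{\hat{x}})$ is nonincreasing and supermodular in $\overline{S}$, which is the claim.

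I do not expect a genuine obstacle here: the argument is a one-line reduction to Lemma~\ref{lemma:lambda}. The only point that needs care is the sign of the weights $(1-\hat{x}_{i})\hat{x}_{j}$, i.e.\ the implicit requirement $\boldsymbol{0}<\boldsymbol{\hat{x}}\leq\boldsymbol{1}$; this is legitimate since $\boldsymbol{\hat{x}}$ is a vector of target infection probabilities and the dynamics \eqref{eq:CT-node-dynamics} keep each $x_{i}(t)$ in $[0,1]$. If one wanted $J_{i}$ to be \emph{strictly} decreasing in the relevant directions, one would additionally note $\hat{x}_{j}>0$, but for submodularity only nonnegativity of the weights is used.
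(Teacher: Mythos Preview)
Your proposal is correct and follows essentially the same route as the paper: write $J_{i}(\overline{S};\boldsymbol{\hat{x}})$ as the constant $-\gamma_{i}\hat{x}_{i}$ plus a nonnegative linear combination of the functions $\lambda_{ij}(\overline{S})$, invoke Lemma~\ref{lemma:lambda}, and apply the closure of supermodularity and monotonicity under nonnegative weighted sums and constant shifts. The only cosmetic difference is that the paper factors the sum in two stages (first $\sum_{j}\hat{x}_{j}\lambda_{ij}(\overline{S})$, then multiplies by $1-\hat{x}_{i}$), whereas you absorb both factors into a single coefficient $(1-\hat{x}_{i})\hat{x}_{j}$; your explicit justification that this coefficient is nonnegative because $\hat{x}_{i},\hat{x}_{j}\in[0,1]$ is, if anything, slightly cleaner.
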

\begin{proof}
     We first prove $\sum_{j\in N(i)}\lambda_{ij}(\overline{S})\hat{x}_j$ is nonnegative, nonincreasing and supermodular in $\overline{S}$. Based on the results in Lemma \ref{lemma:lambda} and the fact that $\hat{x}_j\geq 0, \forall j\in N(i)$, we have  $\sum_{j\in N(i)}\lambda_{ij}(\overline{S})\hat{x}_j$ is a nonnegative weighted sums of nonincreasing modular functions, implying $\sum_{j\in N(i)}\lambda_{ij}(\overline{S})\hat{x}_j$ is a nonnegative, nonincreasing and supermodular function. 

    Hence, we have $J_{i}(\overline{S};\boldsymbol{\hat{x}})$ is a sum of a constant number $-\gamma_{i}\hat{x}_{i}$ and nonnegative weighted sums of nonnegative, nonincreasing supermodular function, implying $J_{i}(\overline{S};\boldsymbol{\hat{x}})$ is a nonincreasing and supermodular function in $\overline{S}$.
\end{proof}

We define $\overline{J}(\overline{S};\boldsymbol{\hat{x}}) = \sum_{i=1}^{N} \max\{J_{i}(\overline{S};\boldsymbol{\hat{x}}), 0\}$. Finally, we prove the constraint in (\ref{eq:relax-quasi-stationary-problem}) has a submodular structure.
\begin{proposition}
\label{prop:qs-cost-function}
    The function $\overline{J}(\overline{S};\boldsymbol{\hat{x}})$ is nonincreasing and supermodular function in $\overline{S}$.
\end{proposition}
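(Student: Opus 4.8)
The plan is to assemble the claim from two facts already available in the excerpt: Lemma \ref{lemma:J}, which gives that each $J_{i}(\overline{S};\boldsymbol{\hat{x}})$ is nonincreasing and supermodular in $\overline{S}$; and the closure properties recalled in the Background on Submodularity, namely that (i) if $f$ is nonincreasing and supermodular then $\max\{f,c\}$ is nonincreasing and supermodular for any real $c$, and (ii) nonnegative weighted sums of supermodular functions are supermodular (and, trivially, sums of nonincreasing functions are nonincreasing).

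First I would fix an arbitrary index $i \in \{1,\ldots,N\}$ and apply Lemma \ref{lemma:J} to conclude that $J_{i}(\overline{S};\boldsymbol{\hat{x}})$ is nonincreasing and supermodular in $\overline{S}$. Then I would invoke fact (i) with $c = 0$ to obtain that $g_{i}(\overline{S}) \triangleq \max\{J_{i}(\overline{S};\boldsymbol{\hat{x}}),0\}$ is nonincreasing and supermodular in $\overline{S}$; here it is essential that Lemma \ref{lemma:J} delivers \emph{both} monotonicity and supermodularity, since the $\max$-with-a-constant closure result requires both hypotheses simultaneously.

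Next I would sum over $i$: since $\overline{J}(\overline{S};\boldsymbol{\hat{x}}) = \sum_{i=1}^{N} g_{i}(\overline{S})$ is a finite sum of functions each of which is nonincreasing and supermodular, fact (ii) (with all weights equal to one) gives that $\overline{J}(\overline{S};\boldsymbol{\hat{x}})$ is supermodular, and a finite sum of nonincreasing set functions is nonincreasing, so $\overline{J}$ is also nonincreasing. This establishes the proposition.

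I do not anticipate a substantive obstacle: the argument is a direct composition of previously established results, and the only point requiring mild care is to make sure the hypotheses of the $\max\{\cdot,0\}$ closure lemma are exactly matched by the conclusion of Lemma \ref{lemma:J}, and that the summation step is applied to the $g_i$ (post-truncation) rather than to the $J_i$ themselves, since it is the truncated functions $g_i$ that appear in the constraint of \eqref{eq:relax-quasi-stationary-problem}.
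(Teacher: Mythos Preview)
Your proposal is correct and follows essentially the same approach as the paper's own proof: invoke Lemma~\ref{lemma:J} to get that each $J_i$ is nonincreasing and supermodular, apply the $\max\{\cdot,0\}$ closure property, and then sum. The paper's version is terser but the logical structure is identical.
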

\begin{proof}
    Based on Lemma \ref{lemma:J}, we have $\max\{J_{i}(\overline{S};\boldsymbol{\hat{x}}), 0\}$ is a nonincreasing and supermodular function in $\overline{S}$. Therefore, $\sum_{i=1}^{N} \max\{J_{i}(\overline{S};\boldsymbol{\hat{x}}), 0\}$ is a nonnegative weighted sum of nonincreasing supermodular functions, implying it is a nonincreasing supermodular function in $\overline{S}$.
\end{proof}
Next, we consider the structure of the cost function $\mathcal{C}(\overline{S})$.
\begin{lemma}
    For any $\overline{S}\subseteq S$, we have $\mathcal{C}(\overline{S})$ is a nondecreasing and submodular function of $\overline{S}$.
\end{lemma}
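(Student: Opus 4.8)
The plan is to split $\mathcal{C}(\overline{S}) = \alpha_1\mathcal{C}_1(\overline{S}) + \alpha_2\mathcal{C}_2(\overline{S}) + \alpha_3\mathcal{C}_3(\overline{S})$ into its three metric components and show that each component is nondecreasing and submodular in $\overline{S}$; since $\alpha_1,\alpha_2,\alpha_3 \geq 0$, the claim then follows from the closure of submodular (and of monotone nondecreasing) set functions under nonnegative weighted sums, as recalled in Section~\ref{sec:preliminaries}.

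For the additive metric, $\mathcal{C}_1(\overline{S}) = \sum_{r\in\overline{S}} c_{r,1}|V_r|$ is a modular set function---a sum of nonnegative weights indexed by the elements of $\overline{S}$---hence submodular (indeed supermodular as well) and nondecreasing. For the identical-cost metric, $\mathcal{C}_3(\overline{S}) = c_0\,|\overline{V}(\overline{S})| = c_0\bigl|\bigcup_{r\in\overline{S}}V_r\bigr|$ is the constant $c_0 \geq 0$ times a coverage function, whose marginal gain from adding a cluster $w$ is $\bigl|V_w\setminus\bigcup_{r\in\overline{S}}V_r\bigr|$, visibly nonnegative and nonincreasing in $\overline{S}$; thus $\mathcal{C}_3$ is monotone nondecreasing and submodular. (Equivalently, $\mathcal{C}_3$ is the special case of the maximum-cost metric below with every $c_{r,2}$ replaced by the constant $c_0$.)

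The crux is the maximum-cost metric $\mathcal{C}_2$, which is not modular. I would rewrite $\mathcal{C}_2(\overline{S}) = \sum_{i\in V} g_i(\overline{S})$, where $g_i(\overline{S}) \triangleq \max\bigl(\{0\}\cup\{c_{r,2} : r\in\overline{S},\ i\in V_r\}\bigr)$; this equals $\overline{c}_i(\overline{S})$ for $i\in\overline{V}(\overline{S})$ and $0$ otherwise (the costs being nonnegative), so the rewriting is exact. Each $g_i$ has the form $\overline{S}\mapsto\max_{r\in\overline{S}} w_r$ with nonnegative weights and baseline $0$, hence is monotone nondecreasing, and the marginal gain of adding a cluster $w$ to $\overline{S}$ equals $\max\{w_w - g_i(\overline{S}),\,0\}$, which is nonincreasing in $\overline{S}$ by monotonicity of $g_i$; this is precisely the diminishing-returns condition, so $g_i$ is submodular. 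Summing over $i\in V$ preserves both properties, and recombining the three metrics with the nonnegative coefficients $\alpha_1,\alpha_2,\alpha_3$ finishes the proof. The only point requiring care is this maximum term---once the per-node maximum is shown submodular via the marginal-gain characterization, the rest is routine bookkeeping.
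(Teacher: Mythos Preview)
Your proof is correct and follows essentially the same approach as the paper: decompose $\mathcal{C}$ into its three components, establish that each is nondecreasing and submodular, and combine via closure under nonnegative weighted sums. Your handling of $\mathcal{C}_2$ via the per-node max function $g_i$ and its marginal-gain characterization is slightly more streamlined than the paper's explicit case-by-case analysis of the first and second differences of $\overline{c}_i$, but the two arguments are equivalent in content.
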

\begin{proof}
    Assume we have ${S}_{1}\subseteq \overline{S}_{2}\subseteq S$ with $w\in S\setminus \overline{S}_{2}$.
    We first prove  $\mathcal{C}_{1}(\overline{S})$ is a nondecreasing modular function of $\overline{S}$. $\mathcal{C}_1(\overline{S}) = \sum_{r\in \overline{S}}c_{r,1}|V_r|$. For $r\in S\setminus \overline{S}$, $\mathcal{C}_1(\overline{S}\cup\{r\}) - \mathcal{C}_1(\overline{S}) = {c_{r,1}|V_r|}$ implies that it is a nondecreasing function of $\overline{S}$. Furthermore, $\mathcal{C}_1(\overline{S}_1\cup\{w\}) - \mathcal{C}_1(\overline{S}_1) = \mathcal{C}_1(\overline{S}_2\cup\{w\}) - \mathcal{C}_1(\overline{S}_2) = c_{w,1}|V_w|$, implies it is a modular function of $\overline{S}$.   
    Next, we prove $\mathcal{C}_2(\overline{S})$ is a nondecreasing and submodular function of $\overline{S}$. We have $\overline{V}(\overline{S})\subseteq \overline{V}(\overline{S}\cup \{r\})$. Then, 
    \begin{multline*}
       \overline{c}_{i}(\overline{S}\cup\{r\})-\overline{c}_{i}(\overline{S}) \\
       = \left\{
        \begin{array}{ll}       
        0, & i\in \overline{V}(\overline{S})\setminus V_r\\
        \max\{c_{r,2} - \overline{c}_{i}(\overline{S}),0\}, & i\in \overline{V}(\overline{S})\cap V_r\\
        c_{r,2}, & i\in \overline{V}(\overline{S}\cup\{r\})\setminus \overline{V}(\overline{S})
        \end{array}
        \right.   
    \end{multline*}
    implying $\overline{c}_i(\overline{S})$ is a nondecreasing function of $\overline{S}$. Furthermore, define $\tilde{c}_{i}(\overline{S},r) = \max\{c_{r,2} - \overline{c}_{i}(\overline{S}),0\}$, we have 
    \begin{multline*}
      (\overline{c}_{i}(\overline{S}_1\cup\{w\})-\overline{c}_{i}(\overline{S}_1))- (\overline{c}_{i}(\overline{S}_2\cup\{w\})-\overline{c}_{i}(\overline{S}_2))=\\
      \left\{
        \begin{array}{ll}
        \tilde{c}_{i}(\overline{S}_1,w) - \tilde{c}_{i}(\overline{S}_2,w), & i\in \overline{V}(\overline{S}_1)\cap V_r\\
        c_{w,2}-\tilde{c}_i(\overline{S}_2,w), & i\in (\overline{V}(\overline{S}_2)\setminus \overline{V}(\overline{S}_1))\cap V_w\\
        0, & \mbox{otherwise}
        \end{array}
        \right.   
    \end{multline*}
    We can observe that $c_{w,2}-\overline{c}_i(\overline{S}_2)\leq c_{w,2}-\overline{c}_i(\overline{S}_1)$ because $\overline{c}_i(\overline{S})$ is nondecreasing. Hence, $\tilde{c}_{i}(\overline{S}_1,w) - \tilde{c}_{i}(\overline{S}_2,w)\geq 0$. Also, $\overline{c}_i(\overline{S})$ is nonnegative, implying $c_{w,2}-\overline{c}_{i}(\overline{S}_2,w)\geq 0$. Therefore, we have  $\mathcal{C}_2(\overline{S}) = \sum_{i\in \overline{V}(\overline{S})}\overline{c}_{i}(\overline{S})$ is a submodular function of $\overline{S}$. We observe that when we set $c_r=c_0$, $\forall r\in S$, then $\mathcal{C}_3(\overline{S})$ = $\mathcal{C}_2(\overline{S})$. Therefore, $\mathcal{C}_3(\overline{S})$ is a nondecreasing and submodular function of $\overline{S}$. Therefore, $\mathcal{C}(\overline{S})$ is a nonnegative weighted sums of 
    nondecreasing and submodular functions of $\overline{S}$, implying $\mathcal{C}(\overline{S})$ is a nondecreasing and submodular function of $\overline{S}$,
    completing the proof.
\end{proof}

Based on Proposition \ref{prop:qs-cost-function}, we have \eqref{eq:relax-quasi-stationary-problem} is a submodular cost submodular cover (SCSC) problem, which can be solved by the algorithm proposed in \cite{iyer2013submodular}. The iterative algorithms in \cite{iyer2013submodular} select appropriate surrogate functions to the cost function and constraint function at each iteration and optimize over them until convergence or a predefined iteration. Algorithm optimality bounds follow from results in \cite{iyer2013submodular} and are omitted due to space constraints. When we consider $\mathcal{C}(\overline{S}) = \mathcal{C}_1(\overline{S})$,
Eq. \eqref{eq:relax-quasi-stationary-problem} is a submodular covering problem and can be solved by the greedy algorithm in polynomial time as follows \cite{wolsey1982analysis} . Assume there exists a $\hat{S}\subseteq S$ such that $\overline{J}(\hat{S} ; \hat{\boldsymbol{x}}) = 0$. The set $\overline{S}$ is initialized to be empty. At each iteration, select an element $r \in S \setminus \overline{S}$ that minimize $\frac{\overline{J}(\overline{S} \cup \{r\}; \boldsymbol{\hat{x}})-\overline{J}(\overline{S}; \boldsymbol{\hat{x}})}{c_{r,1}|V_r|},$ and add this element to $\overline{S}$. The algorithm terminates when $\overline{J}(\overline{S} ; \hat{\boldsymbol{x}}) = 0$. The following proposition describes the optimality guarantees of this approach. 
\begin{proposition}
    \label{prop:qs-optimality}
    Suppose there is an NPI strategy $\overline{S}$ satisfying $\boldsymbol{x}^{\ast}(\lambda(\overline{S})) \leq \boldsymbol{\hat{x}}$. The greedy algorithm for selecting an NPI strategy returns a set $\overline{S}$ 
    satisfying $\frac{\mathcal{C}_1(\overline{S})}{\mathcal{C}_1(S^{\ast})} \leq 1 + \log{\left\{\frac{\overline{J}(\emptyset ; \boldsymbol{\hat{x}})}{\overline{J}(\hat{S} ; \boldsymbol{\hat{x}})}\right\}},$ where $\hat{S}$ is the value of $\overline{S}$ at the second-last iteration of the algorithm,  $S^{\ast}$ is the optimal solution to \eqref{eq:relax-quasi-stationary-problem}.
\end{proposition}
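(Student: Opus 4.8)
The plan is to cast \eqref{eq:relax-quasi-stationary-problem} with $\mathcal{C}=\mathcal{C}_1$ as a monotone submodular cover instance and then run the classical greedy analysis of Wolsey \cite{wolsey1982analysis}. First I would replace $\overline{J}$ by the coverage function $g(\overline{S})\triangleq\overline{J}(\emptyset;\boldsymbol{\hat{x}})-\overline{J}(\overline{S};\boldsymbol{\hat{x}})$. The map $\overline{J}(\cdot;\boldsymbol{\hat{x}})$ is nonnegative by its definition as a sum of $\max\{J_i,0\}$ terms and, by Proposition \ref{prop:qs-cost-function}, nonincreasing and supermodular; hence $g$ is nonnegative, nondecreasing and submodular with $g(\emptyset)=0$, and the greedy rule that minimizes $\frac{\overline{J}(\overline{S}\cup\{r\};\boldsymbol{\hat{x}})-\overline{J}(\overline{S};\boldsymbol{\hat{x}})}{c_{r,1}|V_r|}$ is exactly the rule that maximizes the gain-per-cost ratio $\frac{g(\overline{S}\cup\{r\})-g(\overline{S})}{c_{r,1}|V_r|}$. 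The standing feasibility assumption (existence of $\hat S\subseteq S$ with $\overline{J}(\hat S;\boldsymbol{\hat{x}})=0$, which also makes $S^{\ast}$ well defined) guarantees $\overline{J}(S^{\ast};\boldsymbol{\hat{x}})=0$, i.e.\ $g(S^{\ast})=\overline{J}(\emptyset;\boldsymbol{\hat{x}})$ is the maximal value of $g$.

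Next I would set up the bookkeeping. Write the greedy iterates as $\emptyset=\overline{S}_0\subsetneq\overline{S}_1\subsetneq\cdots\subsetneq\overline{S}_T=\overline{S}$, with $\overline{S}_k=\overline{S}_{k-1}\cup\{r_k\}$, element cost $c_k\triangleq c_{r_k,1}|V_{r_k}|>0$, and residual $\rho_k\triangleq\overline{J}(\overline{S}_{k-1};\boldsymbol{\hat{x}})$ before step $k$, so $\rho_1=\overline{J}(\emptyset;\boldsymbol{\hat{x}})$, $\rho_T=\overline{J}(\hat S;\boldsymbol{\hat{x}})$ and $\rho_{T+1}=\overline{J}(\overline{S};\boldsymbol{\hat{x}})=0$; note $\rho_T>0$ since $\hat S$ is the penultimate iterate (so the algorithm did not stop there) and $\overline{J}\ge0$. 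The heart of the argument is the per-step inequality $\rho_k\le\frac{\mathcal{C}_1(S^{\ast})}{c_k}(\rho_k-\rho_{k+1})$. To obtain it, enumerate $S^{\ast}=\{r^{\ast}_1,\dots,r^{\ast}_m\}$ and telescope $\overline{J}(\overline{S}_{k-1};\boldsymbol{\hat{x}})-\overline{J}(\overline{S}_{k-1}\cup S^{\ast};\boldsymbol{\hat{x}})$ along $S^{\ast}$; bound each marginal increment by the single-element decrement $\overline{J}(\overline{S}_{k-1};\boldsymbol{\hat{x}})-\overline{J}(\overline{S}_{k-1}\cup\{r^{\ast}_j\};\boldsymbol{\hat{x}})$ using supermodularity of $\overline{J}$ (marginal decrements only grow as the conditioning set shrinks); bound that decrement by $\frac{c_{r^{\ast}_j,1}|V_{r^{\ast}_j}|}{c_k}(\rho_k-\rho_{k+1})$ using the greedy selection rule at iterate $\overline{S}_{k-1}$; and sum over $j$ to get $\overline{J}(\overline{S}_{k-1};\boldsymbol{\hat{x}})-\overline{J}(\overline{S}_{k-1}\cup S^{\ast};\boldsymbol{\hat{x}})\le\frac{\mathcal{C}_1(S^{\ast})}{c_k}(\rho_k-\rho_{k+1})$. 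Since $\overline{J}(S^{\ast};\boldsymbol{\hat{x}})=0$ and $\overline{J}$ is nonincreasing and nonnegative, $\overline{J}(\overline{S}_{k-1}\cup S^{\ast};\boldsymbol{\hat{x}})=0$, and the per-step inequality follows; in particular $\rho_k>0$ forces $\rho_k-\rho_{k+1}>0$, so the algorithm terminates.

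Then I would turn the per-step bound into the stated ratio. Rearranging gives $\frac{c_k}{\mathcal{C}_1(S^{\ast})}\le\frac{\rho_k-\rho_{k+1}}{\rho_k}$, hence $\frac{\mathcal{C}_1(\overline{S})}{\mathcal{C}_1(S^{\ast})}=\sum_{k=1}^{T}\frac{c_k}{\mathcal{C}_1(S^{\ast})}\le\sum_{k=1}^{T}\frac{\rho_k-\rho_{k+1}}{\rho_k}$. The last term equals $\frac{\rho_T-0}{\rho_T}=1$; for $k\le T-1$ we have $\rho_{k+1}\ge\rho_T>0$, and the elementary estimate $1-t\le-\log t$ at $t=\rho_{k+1}/\rho_k\in(0,1]$ gives $\frac{\rho_k-\rho_{k+1}}{\rho_k}\le\log\frac{\rho_k}{\rho_{k+1}}$; these telescope to $\log\frac{\rho_1}{\rho_T}=\log\frac{\overline{J}(\emptyset;\boldsymbol{\hat{x}})}{\overline{J}(\hat S;\boldsymbol{\hat{x}})}$. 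Adding the two contributions yields $\frac{\mathcal{C}_1(\overline{S})}{\mathcal{C}_1(S^{\ast})}\le1+\log\frac{\overline{J}(\emptyset;\boldsymbol{\hat{x}})}{\overline{J}(\hat S;\boldsymbol{\hat{x}})}$.

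The step I expect to be the main obstacle is the per-step covering inequality: one must apply supermodularity of $\overline{J}$ in the correct direction along the telescoping chain, compare the greedy choice against each $r^{\ast}_j$ at the right iterate $\overline{S}_{k-1}$, and use that augmenting by the whole feasible set $S^{\ast}$ drives $\overline{J}$ to zero. Once that inequality is in hand, the remaining pieces---passing to the coverage function $g$, the telescoping sum, and the $1-t\le-\log t$ estimate---are routine, and the guarantee is exactly Wolsey's bound \cite{wolsey1982analysis} for the submodular cover instance \eqref{eq:relax-quasi-stationary-problem} certified by Proposition \ref{prop:qs-cost-function}.
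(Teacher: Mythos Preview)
Your proposal is correct and follows the same approach as the paper: the paper's own proof merely cites Wolsey's submodular covering bound \cite{wolsey1982analysis} together with feasibility of the relaxed constraint, and what you have written is precisely a careful unpacking of that citation, using Proposition~\ref{prop:qs-cost-function} to certify that $g(\overline{S})=\overline{J}(\emptyset;\boldsymbol{\hat{x}})-\overline{J}(\overline{S};\boldsymbol{\hat{x}})$ is monotone submodular and then running the standard per-step covering inequality and telescoping $\log$ argument.
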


The proof follows from the bounds on submodular covering in \cite{wolsey1982analysis} and the assumption that there exists $\overline{S}$ satisfying $\boldsymbol{x}^{\ast}(\lambda(\overline{S})) \leq \boldsymbol{\hat{x}}$ implying $\overline{J}(S; \mathbf{x}^{\ast}) = 0$. In the submodular covering problem, the greedy algorithm selects the element with highest gain per unit cost in each iteration with a polynomial time complexity. Proposition \ref{prop:qs-optimality} implies the greedy algorithm can find the near-optimal solution.

\section{Simulation}
\label{sec:simulation}

\begin{figure}[!bp]
\centering
\begin{subfigure}{0.23\textwidth}
    \centering
    \includegraphics[width = \textwidth]{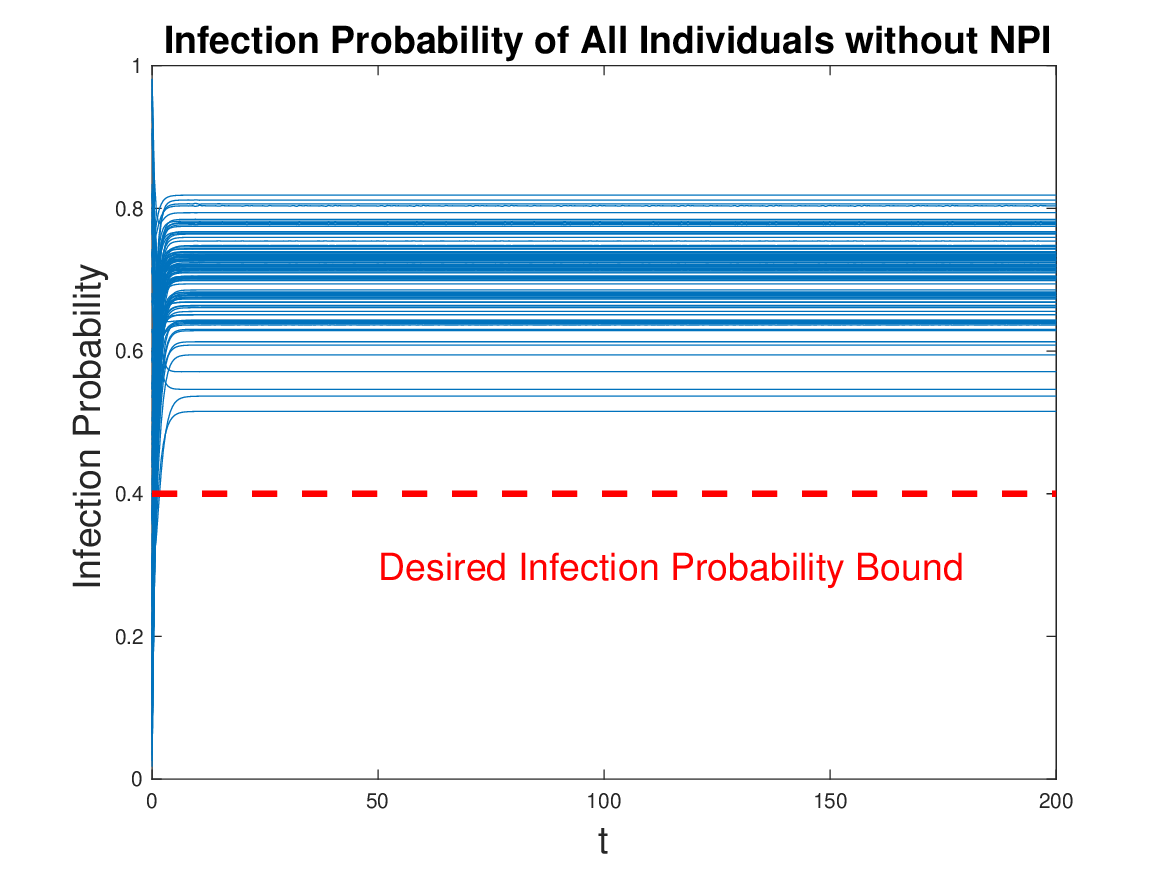}
    \subcaption{}
    \label{fig:pre-NPI}
\end{subfigure}
\hfill
\begin{subfigure}{0.23\textwidth}
  \centering
  \includegraphics[width = \textwidth]{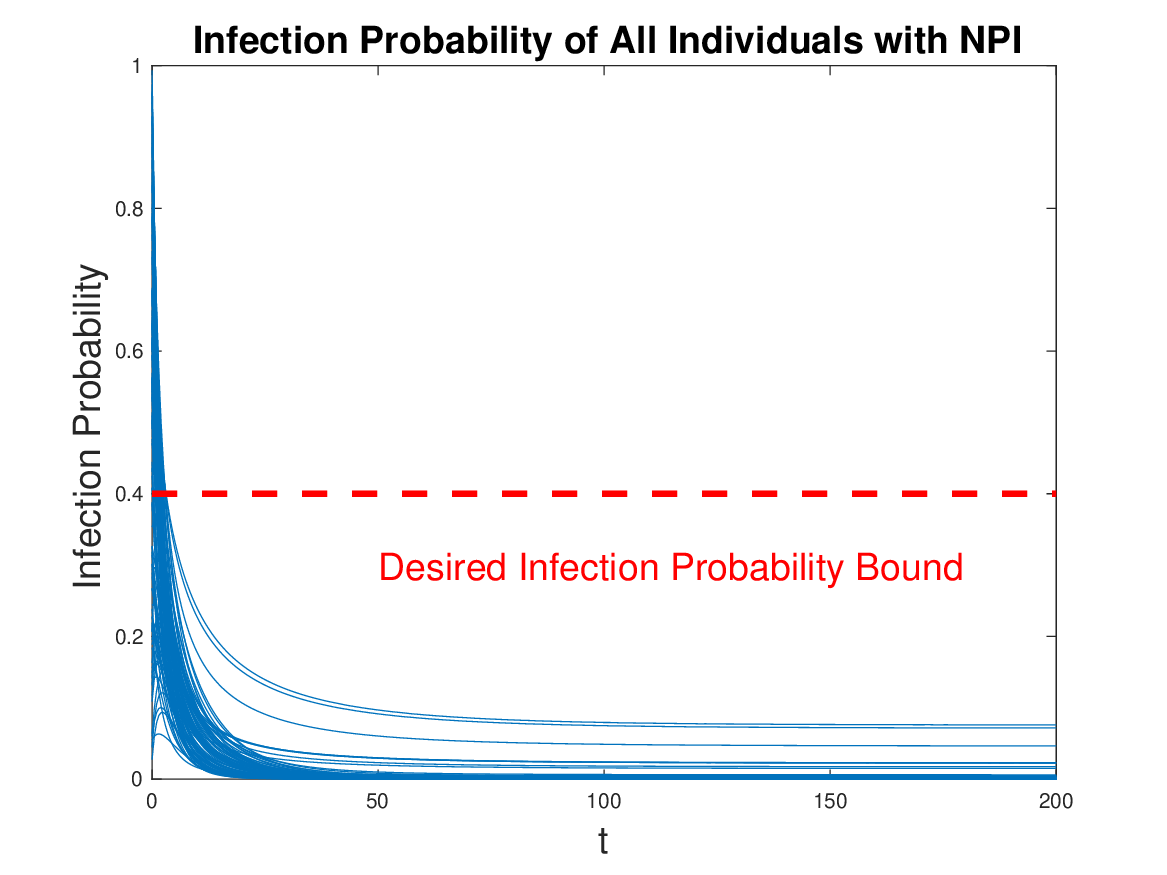}
  \subcaption{}
  \label{fig:post-NPI}
\end{subfigure}
\caption{\textcolor{black}{Infection probabilities of 100 individuals in the absence (left) and presence (right) of NPI strategies over time relative to the desired endemic state value (red dotted horizontal line). 
When NPI strategies are not implemented, the infection probability of individuals converges to the interval $[0.34, 0.73]$. 
All individuals in this case have a significantly higher infection probability than the desired value of $0.05$. 
We compute an NPI strategy by solving (\ref{eq:relax-quasi-stationary-problem}) using the greedy algorithm. 
Fig. \ref{fig:post-NPI} shows that when this NPI strategy is applied to $11$ clusters with $81$ nodes, infection probabilities converge to the interval $[0,\ 0.08]$, which is much lower than the desired endemic state of $0.4$. }
}
    \label{fig:sim}
\end{figure}
In this section, we demonstrate viability of our proposed NPI strategy by examining epidemic spread via an SIS model on a Watts-Strogatz small world network \cite{watts1998collective}.
The network we study has 100 nodes and 200 edges. We set $\gamma_{i}$ to be uniformly distributed in $[0.4, 0.5]$ and $\beta_{i}$ uniformly distributed in 
$ [0.6, 0.7]$ for all nodes, $a_{ij}$ uniformly distributed in
$[0.5, 0.7]$ for all edges, 
and the effect of NPIs as $\theta_1=0.7$ and $\theta_2=0.9$. 
The number of clusters is set to $25$, and number of nodes in each cluster is set between $10$ and $15$. The cost for each cluster is randomly selected from $\{1,2,3,4\}$. Nodes in each cluster are selected randomly. We consider the cost function $\mathcal{C}_1(\overline{S})$ 
and randomly initialize $x_{i}(0)\in [0,1]$ with $\boldsymbol{x}(0)\neq 0$. The desired infection probability is $\hat{x}_{i}=0.05$ for all nodes. In the simulation, we first check existence of a solution by checking if $\overline{J}(S;\boldsymbol{\hat{x}})\leq 0$. If a solution does not exist, we generate a new network. We compare our work to a baseline, in which the cluster with highest sum of node degrees is selected one by one until the constraint in (\ref{eq:relax-quasi-stationary-problem}) holds.

Fig. \ref{fig:sim} shows infection probabilities of 100 individuals in the absence (left) and presence (right) of NPI strategies over time. 
When NPI strategies are not implemented, the infection probability converges to the interval
$[0.52, 0.82]$ (Fig. \ref{fig:pre-NPI}). 
All individuals in this case have a 
higher infection probability than the desired value of $0.4$. 
We compute an NPI strategy by solving (\ref{eq:relax-quasi-stationary-problem}) using the greedy algorithm. 
Fig. \ref{fig:post-NPI} shows that when this strategy is applied to $11$ clusters with $81$ nodes, infection probabilities converge to [0,\ 0.08],
which is much lower than the desired endemic state of $0.4$. In Table  \ref{table: 0}, we set $\beta_{i}$ to be uniformly distributed in $[0.4, 0.6]$ for all nodes, $a_{ij}$ to be uniformly distributed in $[0.4, 0.5]$ for all edges, and maintain values of other parameters. We set the desired infection probability as $0.05$, $0.2$, $0.3, 0.4$, to compare the total NPI costs between the greedy algorithm and the baseline. For the same desired infection probability, the greedy algorithm selects $60$ to $66.7\%$ of clusters and $82.6$ to $86.5\%$ of nodes to satisfy constraints in \eqref{eq:relax-quasi-stationary-problem} compared to the baseline. The total NPI cost based on our greedy algorithm is $32.8$ to $49.1\%$ of that in the baseline.
\begin{table}[!htp]
\centering
\begin{tabular}{|l|c|c|c|c|c|}
\hline
                                                                            & \begin{tabular}[c]{@{}c@{}}Desired Infection\\ Probability\end{tabular} & 0.05 & 0.2 & 0.3 & 0.4 \\ \hline
\multirow{3}{*}{\begin{tabular}[c]{@{}l@{}}Greedy\\ Algorithm\end{tabular}} & \begin{tabular}[c]{@{}c@{}}Number of \\ selected clusters\end{tabular}  & 11   & 10  & 9   & 7   \\ \cline{2-6} 
                                                                            & \begin{tabular}[c]{@{}c@{}}Number of \\ selected nodes\end{tabular}     & 77   & 74  & 71  & 62  \\ \cline{2-6} 
                                                                            & NPI Cost                                                                & 303  & 236 & 195 & 147 \\ \hline
\multirow{3}{*}{Baseline}                                                   & \begin{tabular}[c]{@{}c@{}}Number of \\ selected clusters\end{tabular}  & 17   & 15  & 15 & 11  \\ \cline{2-6} 
                                                                            & \begin{tabular}[c]{@{}c@{}}Number of \\ selected nodes\end{tabular}     & 89   & 86  & 86  & 75  \\ \cline{2-6} 
                                                                            & NPI Cost                                                                & 617  & 554 & 554 & 448 \\ \hline
\end{tabular}
\caption{Comparison of total NPI cost computed by the greedy algorithm and the baseline. For the same desired infection probability, the greedy algorithm selects $60$ to $66.7$
$\%$ of clusters and $82.6$ to $86.5$
$\%$ of nodes to satisfy the constraints in \eqref{eq:relax-quasi-stationary-problem} compared with the baseline. The total NPI cost based on the greedy proposed algorithm is $32.8$ to $49.1$
$\%$ of the total NPI cost in the baseline.
}
\label{table: 0}
\end{table}
\section{Conclusion and future work}
\label{sec:conclusions}
This paper studied the problem of selecting low-cost strategies for NPIs to mitigate epidemics. 
We considered a networked SIS model relaxed by mean-field approximation. We assumed that the network includes multiple clusters and individuals in selected clusters take NPIs during interactions with their neighbors.
We showed that the problem of selecting a minimum-cost intervention strategy to ensure that the steady state of the infection probability remains below a specified threshold is submodular. 
Our experiments on a Watts-Strogatz small world graph showed that NPI strategies were highly effective in reducing infection levels of members of the population below a desired threshold. In the future, we plan to analyze sensitivity of our 
algorithm to network topologies.  
We will also examine cost functions that consider incentives based on insights from behavioral economics. 

\bibliographystyle{ieeetr}
\bibliography{bib_epidemic}
\end{document}